\documentclass[runningheads,12pt]{llncs}
\usepackage{graphicx}
\usepackage{tikz}
\usetikzlibrary{snakes}
\usepackage{xcolor}
\usepackage{hyperref}
\usepackage{xspace}
\usepackage{amssymb,amsmath,amsfonts,graphicx}
\usepackage{subcaption}

\newcommand{\vol}{{\rm vol}}

\usepackage{marginnote}
\usepackage{verbatim}

\usepackage{sectsty}
\allsectionsfont{\sffamily}

\begin{document}

\title{Iterated Global Models for Complex Networks\thanks{The first author acknowledges funding from an NSERC Discovery grant.}}
\titlerunning{Iterated Global Network Models}

\author{Anthony Bonato\inst{1} \and Erin Meger\inst{1}}

\authorrunning{A. Bonato, E. Meger}

\institute{Ryerson University, Toronto, ON, Canada \\
\email{\href{mailto:abonato@ryerson.ca}{abonato@ryerson.ca}, \href{mailto:erin.k.meger@ryerson.ca}{erin.k.meger@ryerson.ca}}}

\maketitle

\begin{abstract}
We introduce the Iterated Global model as a deterministic graph process that simulates several properties of complex networks. In this model, for every set $S$ of nodes of a prescribed cardinality, we add a new node that is adjacent to every node in $S$. We focus on the case where the size of $S$ is approximately half the number of nodes at each time-step, and we refer to this as the half-model. The half-model provably generate graphs that densify over time, have bad spectral expansion, and low diameter. We derive the clique, chromatic, and domination numbers of graphs generated by the model.
\keywords{Network models \and social networks \and densification \and spectral graph theory}
\end{abstract}

\section{Introduction}

Over the last two decades, research in modelling complex networks has become of great interest to mathematicians and theoretical computer scientists. Complex networks arise in technological, social, and biological contexts. The emergence of the study of complex networks such as the web graph and on-line social networks has focused attention on these large-scale graphs, and in the modeling and mining of their emergent properties; see \cite{bonato,bt,chung} for more on these models.

Two deterministic models of complex networks of particular interest to the current study were introduced: the \emph{Iterated Local Transitivity (ILT)} model and the \emph{Iterated Local Anti-Transitivity (ILAT)} model \cite{ilt,ilat}. Consider a social network where friendships have positive edge signs and adversarial relations have negative edge signs. A \emph{triad} is a set of three nodes in a signed network. A triad is said to be \emph{balanced} if the product of the edge signs is positive. Structural balance theory states that these networks seek to balance all triads \cite{ek}. The ILT and ILAT models were designed with balanced triads in mind. In the ILT model, nodes are \emph{cloned}, where nodes are adjacent to all neighbors of their parent node. In the ILAT model, nodes are \emph{anti-cloned}, where a new node is adjacent to all non-neighbors of it's parent node.  The ILT and ILAT models simulates many properties of social networks. For example, as shown in \cite{ilt}, graphs generated by the model densify over time (see \cite{les1} for more on densification), and exhibit bad spectral expansion (see \cite{estrada} for more on this topic in social networks). In addition, the ILT model generates graphs with the small-world property, which requires the graphs to have low diameter and dense neighbor sets. Both the ILT and ILAT models were unified in the recent context of Iterated Local Models in \cite{ilm}

The ILT, ILAT, and ILM models focused on considering the local structure of the graph and generating a new model iteratively from this structure. We now define a model that is independent of the structure of the initial graph but retains the iterative character of the previously defined models. We introduce the \emph{Iterated Global Models}, where a dominating node is added for each subset of nodes of a given cardinality.

Let $k\ge 1$ be an integer. The one parameter of the model is the initial, connected graph $G = G_0.$  At each time-step $t\ge 0,$ we create $G_{t+1}$ from $G_{t}$ in the following way: for each set of nodes of cardinality $\lfloor \frac{1}{k}|V(G_t)| \rfloor $, say $S$, add a new $v_S$ that is adjacent to each node of $S$. We name this process the \emph{$\frac{1}{k}$-model}. For ease of notation and for consistency with earlier chapters, we refer to newly added nodes in $G_{t+1}$ as \emph{clones}. Note that the clones form an independent set in $G_{t+1}.$

For the sake of clarity, we focus in this paper on the case $k=2,$ which we refer to as the \emph{half-model}. In the half-model, each new node is adjacent to approximately half of the existing network. See Figure~\ref{halfc4} for an example.
\begin{figure}[h!]
\begin{center}
\includegraphics[scale=1]{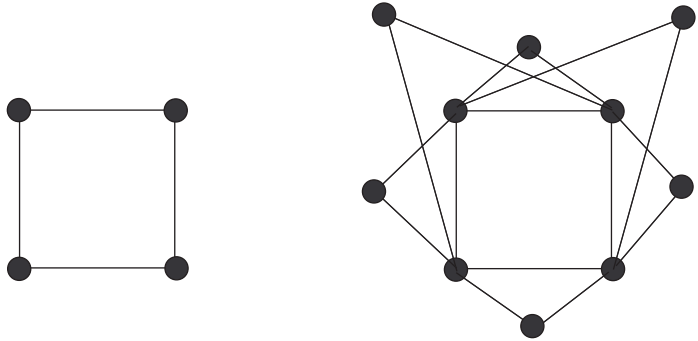}
\caption{One time-step of the half-model beginning with $C_4.$}\label{halfc4}
\end{center}
\end{figure}

While structural balance theory considers the importance of local ties, the half-model may be useful in analyzing complex networks where nodes interact via weaker, non-local ties. In social networks such as Twitter, Instagram, or Reddit, we may form a network of users where links are determined by likes, comments, or comments. For example, a user on Reddit may choose to comment on a fraction of the posts they read, which is reflective of the design of the half-model.

The paper is organized as follows. In Section~\ref{secc}, we prove that, as observed in complex networks, the half-model densifies over time and has bad spectral expansion. We also show that after five time-steps, graphs generated by the model have diameter at most 3. The half-model is of graph theoretic interest in its own right, and in Section~\ref{sech} we determine the clique, chromatic, and domination numbers of graphs generated by the model. We conclude with further directions to investigate for the half-model.

For a general reference on graph theory, the reader is directed to \cite{west}. For background on social and complex networks, see \cite{bonato,bt,chung1}. Throughout the paper, we consider finite, undirected graphs.

\section{Complex network properties of the half-model}\label{secc}

Our first result establishes the order and size of graphs generated by the half-model. We first recall Stirling's approximation for the factorial given by
$$n! \sim \sqrt{2\pi n} \left(\frac{n}{e}\right)^n.$$
Stirling's approximation may be used to derive an expression for the central binomial coefficient given by
$${{2n}\choose{n}} \sim \frac{2^{2n}}{\sqrt{\pi n}},$$
which may be derived directly and is part of folklore. Such an approximation will be useful in our analysis, and its usefulness has provided motivation for the study of the half-model as opposed to other values of $k$. For an exposition of the asymptotics of binomial coefficients, see the book \cite{spen}.

The number of nodes of $G_t$ is denoted $n_t$, the number of edges is denoted $e_{t}$.

\begin{theorem}\label{en1}
The order and size of the graph $G_t$ in the half-model are given by the following, respectively:
$$n_t \sim  \binom{n_{t-1}}{\left \lfloor \frac{n_{t-1}}{2} \right \rfloor} \hspace{1cm} \textrm{and} \hspace{1cm} e_t \sim  \binom{n_{t-1}}{\left \lfloor \frac{n_{t-1}}{2} \right \rfloor} \cdot \left \lfloor \frac{n_{t-1}}{2} \right \rfloor.$$
\end{theorem}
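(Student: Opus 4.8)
The plan is to first write down the exact recurrences governing $n_t$ and $e_t$, and then show that the contribution of the previous graph is asymptotically negligible against the exponentially large contribution from the newly added clones. For the order, I would count directly: at time-step $t$ we add exactly one clone $v_S$ for each subset $S \subseteq V(G_{t-1})$ of cardinality $\lfloor n_{t-1}/2 \rfloor$, so the number of clones added is precisely $\binom{n_{t-1}}{\lfloor n_{t-1}/2 \rfloor}$, giving the exact identity
$$n_t = n_{t-1} + \binom{n_{t-1}}{\left\lfloor \frac{n_{t-1}}{2} \right\rfloor}.$$
For the size, I would use that each clone $v_S$ contributes exactly $\lfloor n_{t-1}/2 \rfloor$ new edges (one to each node of $S$), together with the observation, already noted in the excerpt, that the clones form an independent set in $G_t$; hence no clone–clone edges arise and there is no double counting. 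This yields the exact identity
$$e_t = e_{t-1} + \binom{n_{t-1}}{\left\lfloor \frac{n_{t-1}}{2} \right\rfloor} \cdot \left\lfloor \frac{n_{t-1}}{2} \right\rfloor.$$

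The second step is to pass from these exact recurrences to the claimed asymptotic equivalences, interpreted in the regime $t \to \infty$, which forces $n_{t-1} \to \infty$ since the graph strictly grows at each step. For the order, I would divide through by the central binomial coefficient and show that $n_{t-1} \big/ \binom{n_{t-1}}{\lfloor n_{t-1}/2 \rfloor} \to 0$; this is immediate from the approximation $\binom{2n}{n} \sim 2^{2n}/\sqrt{\pi n}$ recalled in the excerpt, since the binomial coefficient grows exponentially in $n_{t-1}$ while the carried-over term $n_{t-1}$ is only linear. For the size, I would bound the carried-over term crudely by $e_{t-1} \le \binom{n_{t-1}}{2} = O(n_{t-1}^2)$, which is polynomial in $n_{t-1}$, whereas the new-edge term $\binom{n_{t-1}}{\lfloor n_{t-1}/2 \rfloor}\cdot \lfloor n_{t-1}/2\rfloor$ grows exponentially; dividing shows the ratio tends to $1$.

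I expect no deep obstacle here, as the result is fundamentally a careful counting argument; the main care is bookkeeping. The one point needing attention is the floor and the parity of $n_{t-1}$: the stated approximation is for the even central binomial coefficient $\binom{2n}{n}$, so when $n_{t-1}$ is odd I would invoke the analogous estimate for $\binom{2n+1}{n}$, noting that in both cases the coefficient is $\Theta\!\left(2^{n_{t-1}}/\sqrt{n_{t-1}}\right)$, which is all that is required to dominate the polynomial error terms. The other point worth stating explicitly is the meaning of the symbol $\sim$: I would clarify at the outset that the asymptotics are taken as $t \to \infty$ (equivalently as $n_{t-1} \to \infty$), so that the recurrences above reduce cleanly to the asserted leading-order behaviour.
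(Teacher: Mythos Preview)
Your proposal is correct and follows essentially the same approach as the paper: write down the exact recurrences for $n_t$ and $e_t$ by counting the clones and their incident edges, then observe that the central-binomial term dominates the carried-over part. The paper presents the order computation as the unrolled sum $n_t = n_0 + \sum_{i=1}^{t}\alpha_{i-1}$ rather than your one-step recursion, but these are equivalent; your write-up is in fact more explicit than the paper's in justifying the domination (via the crude bound $e_{t-1}\le \binom{n_{t-1}}{2}$ and the exponential growth of $\binom{n_{t-1}}{\lfloor n_{t-1}/2\rfloor}$), and your remarks on parity and on the meaning of $\sim$ are welcome clarifications.
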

Before we give the proof of Theorem~\ref{en1}, we simplify notation by defining the function
$$ \alpha_t = \binom{n_t}{\left \lfloor \frac{n_t}{2} \right\rfloor } .  $$

\begin{proof}
We begin with the order of $G_t$. By the definition of the model, at each time-step $t\ge 1,$ we add one node for each set of size $\left \lfloor \frac{n_{t-1}}{2} \right \rfloor$. Hence, we derive the following sum given by
\[ n_t = n_0 + \sum_{i = 1}^{t} \alpha_{i-1}.\]
The term $\alpha_{t-1}$ will dominate the rest of the summation, which gives us the desired expression for the order of $G_t$.

Next, we determine the size of $G_t$. Each new node added is adjacent to a set of size $\left \lfloor \frac{n_{t-1}}{2} \right \rfloor$, and we add $\alpha_{t-1}$ nodes, so we obtain the following recursive formula for the number of edges at time-step $t:$

\[ e_t = e_{t-1} + \left \lfloor \frac{n_{t-1}}{2} \right \rfloor  \alpha_{t-1}.  \]
We observe that the second term dominates the sum, and the result follows. \qed
\end{proof}

We say that a network \emph{densifies} if the limit of the ratio of edges to nodes is unbounded. Densification power laws in complex networks were first reported in \cite{les1}. From Theorem~\ref{en1} we have the following result.

\begin{corollary}
The half-model densifies with time.
\end{corollary}

\begin{proof}
By Theorem~\ref{en1}, we have that
\[  \frac{e_t}{n_t} \sim \frac{\alpha_{t-1}\cdot \left \lfloor \frac{n_{t-1}}{2} \right \rfloor}{\alpha_{t-1}} = \left \lfloor\frac{n_{t-1}}{2} \right \rfloor ,\]
which tends to infinity with $t$. \qed \end{proof}

For a graph $G$ and sets of nodes $X,Y \subseteq V(G)$, define $E(X,Y)$ to be the set of edges in $G$ with one endpoint in $X$ and the other in $Y.$ For simplicity, we write $E(X)=E(X,X).$ Let $A$ denote the adjacency matrix and $D$ denote the diagonal degree matrix of a graph $G$. The \emph{normalized Laplacian} of $G$ is
\[ \mathcal{L} = I - D^{-1/2}AD^{-1/2}.\]
Let $0 = \lambda_0 \leq \lambda_1 \leq \cdots \leq \lambda_{n-1} \leq 2$ denote
the eigenvalues of $\mathcal{L}$. The \emph{spectral gap} of the normalized Laplacian is defined as
\[
\lambda = \max\{ |\lambda_1 - 1|, |\lambda_{n-1} - 1| \}.
\]

We will use the expander mixing lemma for the normalized Laplacian~\cite{chung}. For sets of nodes $X$ and $Y$, we use the notation $\vol(X) = \sum_{v \in X} \deg(v)$ for the volume of $X$, $\overline{X} = V \setminus X$ for the complement of $X$, and, $e(X,Y)$ for the number of edges with one end in each of $X$ and $Y.$ Note that $X \cap Y$ need not be empty, and in this case, the edges completely contained in $X\cap Y$ are counted twice. In particular, $e(X,X) = 2 |E(X)|$.

\begin{lemma}[Expander mixing lemma]\cite{chung}\label{mix}
If $G$ is a graph with spectral gap $\lambda$, then, for all sets $X \subseteq V(G),$
\[
\left| e(X,X) - \frac{(\vol(X))^{2}}{\vol(G)} \right| \leq \lambda \frac{\vol(X)\vol(\overline{X})}{\vol(G)}.
\]
\end{lemma}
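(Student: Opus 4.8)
The plan is to pass from the combinatorial quantity $e(X,X)$ to a quadratic form in the normalized adjacency matrix and then exploit the spectral gap. First I would set $N = D^{-1/2}AD^{-1/2}$, so that $\mathcal{L} = I - N$ and the eigenvalues of $N$ are exactly $1-\lambda_i$. Observing that $e(X,X)$ is the sum of the entries of $A$ over $X \times X$ (each internal edge counted twice, matching the convention $e(X,X) = 2|E(X)|$), I would record the identity $e(X,X) = \mathbf{1}_X^{T} A \mathbf{1}_X = y^{T} N y$, where $\mathbf{1}_X$ is the characteristic vector of $X$ and $y = D^{1/2}\mathbf{1}_X$. The two scalar quantities I would then compute are $\|y\|^2 = \mathbf{1}_X^{T} D \mathbf{1}_X = \vol(X)$ and $\langle y, D^{1/2}\mathbf{1}\rangle = \mathbf{1}_X^{T} D \mathbf{1} = \vol(X)$.

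The heart of the argument is the spectral decomposition of $N$. Its top eigenvector is $\phi_0 = D^{1/2}\mathbf{1}/\sqrt{\vol(G)}$ with eigenvalue $1$ (corresponding to $\lambda_0 = 0$), using $\|D^{1/2}\mathbf{1}\|^2 = \vol(G)$. I would split $y = c\,\phi_0 + y^{\perp}$ with $y^{\perp} \perp \phi_0$, and compute the coefficient $c = \langle y, \phi_0\rangle = \vol(X)/\sqrt{\vol(G)}$. The component along $\phi_0$ then contributes exactly $c^2 \cdot 1 = \vol(X)^2/\vol(G)$ to $y^{T}Ny$, which is precisely the main term being subtracted. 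Hence $e(X,X) - \vol(X)^2/\vol(G) = (y^{\perp})^{T} N\, y^{\perp}$, reducing the lemma to an estimate on the orthogonal complement of $\phi_0$.

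To bound this remaining term, I would invoke the defining property of the spectral gap: restricted to $\phi_0^{\perp}$, the matrix $N$ has eigenvalues $1-\lambda_i$ for $i \ge 1$, each satisfying $|1-\lambda_i| \le \lambda$. By the Rayleigh quotient bound for symmetric matrices, $|(y^{\perp})^{T} N\, y^{\perp}| \le \lambda \|y^{\perp}\|^2$. It remains to evaluate $\|y^{\perp}\|^2 = \|y\|^2 - c^2 = \vol(X) - \vol(X)^2/\vol(G) = \vol(X)\vol(\overline{X})/\vol(G)$, where $\vol(\overline{X}) = \vol(G) - \vol(X)$. Substituting yields exactly the claimed inequality.

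The routine computations, namely the two inner products and the norm of $y^{\perp}$, are straightforward linear algebra. The step I expect to require the most care is the bookkeeping of the eigenvalue correspondence between $\mathcal{L}$ and $N$: one must verify that $D^{1/2}\mathbf{1}$ is genuinely the eigenvector for eigenvalue $0$ of $\mathcal{L}$ (equivalently eigenvalue $1$ of $N$), so that peeling it off isolates exactly the main term, and that every surviving eigenvalue of $N$ is controlled in absolute value by $\lambda$ rather than merely by $1$. With that correspondence in place, the bound follows immediately from the two norm computations.
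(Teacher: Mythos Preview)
Your argument is correct and is the standard spectral proof of the expander mixing lemma for the normalized Laplacian: express $e(X,X)$ as a quadratic form in $N=D^{-1/2}AD^{-1/2}$, peel off the component of $y=D^{1/2}\mathbf{1}_X$ along the Perron eigenvector $D^{1/2}\mathbf{1}/\sqrt{\vol(G)}$ to produce the main term $\vol(X)^2/\vol(G)$, and bound the remainder using the Rayleigh quotient on $\phi_0^{\perp}$ together with $\|y^{\perp}\|^2=\vol(X)\vol(\overline{X})/\vol(G)$. All of the identities you list check out, and the caveat you flag about the eigenvalue correspondence between $\mathcal{L}$ and $N$ is exactly the right place to be careful.

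There is, however, no proof in the paper to compare against: the lemma is simply quoted from Chung~\cite{chung} and used as a black box in the proof of Theorem~\ref{halfgap}. So your write-up goes beyond what the paper does, supplying the (standard) justification that the authors chose to import by citation.
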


A spectral gap bounded away from zero is an indication of bad expansion properties, which is characteristic for social networks, \cite{estrada}. The next theorem represents a drastic departure from the good expansion found in binomial random graphs, where $\lambda = o(1)$~\cite{chung}.

\begin{theorem}\label{halfgap}
Graphs generated by the half-model satisfy $\lambda_t \sim 1,$ where $\lambda_t$ is the spectral gap of $G_t.$
\end{theorem}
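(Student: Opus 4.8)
The plan is to apply the expander mixing lemma (Lemma~\ref{mix}) to a carefully chosen set $X$, namely the set of clones added at the final time-step in passing from $G_{t-1}$ to $G_t$. Since the clones form an independent set, we have $e(X,X) = 2|E(X)| = 0$, so the left-hand side of Lemma~\ref{mix} collapses to $\vol(X)^2/\vol(G_t)$. The inequality then rearranges to the clean lower bound
\[ \lambda_t \ge \frac{\vol(X)}{\vol(\overline{X})}, \]
and the entire argument reduces to showing that this ratio tends to $1$.

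To estimate the two volumes, first note that each of the $\alpha_{t-1}$ new clones has degree exactly $\lfloor n_{t-1}/2\rfloor$, so $\vol(X) = \alpha_{t-1}\lfloor n_{t-1}/2\rfloor$. This is precisely the number of edges added at the last step, which by Theorem~\ref{en1} is asymptotic to $e_t$. For the complement, every edge counted in $\vol(\overline{X})$ either lies inside $G_{t-1}$ or joins an old node to a new clone; the latter edges again number $\alpha_{t-1}\lfloor n_{t-1}/2\rfloor$ (using the identity $\binom{n_{t-1}-1}{\lfloor n_{t-1}/2\rfloor-1} = \frac{\lfloor n_{t-1}/2\rfloor}{n_{t-1}}\alpha_{t-1}$ summed over the $n_{t-1}$ old nodes), so $\vol(\overline{X}) = 2e_{t-1} + \alpha_{t-1}\lfloor n_{t-1}/2\rfloor$.

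Putting these together gives
\[ \frac{\vol(X)}{\vol(\overline{X})} = \frac{\alpha_{t-1}\lfloor n_{t-1}/2\rfloor}{2e_{t-1} + \alpha_{t-1}\lfloor n_{t-1}/2\rfloor}, \]
so the claim follows once we verify that $e_{t-1}$ is negligible compared with $\alpha_{t-1}\lfloor n_{t-1}/2\rfloor$. This is where the explosive growth inherent to the model does the work: since $n_{t-1} \sim \alpha_{t-2}$ by Theorem~\ref{en1}, the central binomial coefficient $\alpha_{t-1}$ is exponentially large in $n_{t-1}$ and hence dwarfs $e_{t-1} \sim \alpha_{t-2}\lfloor n_{t-2}/2\rfloor$, forcing the ratio to tend to $1$ from below. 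Finally, because the normalized Laplacian eigenvalues all lie in $[0,2]$, the spectral gap satisfies $\lambda_t \le 1$ automatically, so combining the two bounds yields $\lambda_t \sim 1$.

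I expect the main obstacle to be the bookkeeping in the last step: making the comparison $e_{t-1} = o(\alpha_{t-1}\lfloor n_{t-1}/2\rfloor)$ fully rigorous requires unwinding the nested asymptotics from Theorem~\ref{en1}, and some care is needed to ensure the $\sim$ relations compose correctly across successive time-steps rather than merely holding at a single step.
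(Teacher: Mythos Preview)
Your proposal is correct and follows essentially the same route as the paper: choose $X$ to be the clones added at step $t$, use $e(X,X)=0$ in the expander mixing lemma to obtain $\lambda_t \ge \vol(X)/\vol(\overline{X})$, and then show this ratio is asymptotically $1$ via the volume estimates $\vol(X)=\alpha_{t-1}\lfloor n_{t-1}/2\rfloor$ and $\vol(\overline{X})\sim \alpha_{t-1}\lfloor n_{t-1}/2\rfloor$. Your version is in fact slightly more complete than the paper's, since you explicitly note the trivial upper bound $\lambda_t\le 1$ (from the eigenvalue range $[0,2]$) needed to close the $\sim$ statement, whereas the paper only writes out the lower bound.
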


\begin{proof}
Let $X=V(G_t)\backslash V(G_{t-1})$ be the set of cloned nodes added to $G_{t-1}$ to form $G_t$. Since $X$ is an independent set, we note that $e(X,X)=0$. We derive that
\begin{align*}
\textrm{Vol}(G_t) &= 2e_t \sim \alpha_{t-1} \cdot n_{t-1},\\
\textrm{Vol}(X) &= \alpha_{t-1} \cdot \left \lfloor \frac{n_{t-1}}{2}\right \rfloor ,\\
\textrm{Vol}(\overline{X}) &\sim \alpha_{t-1} \cdot \left \lfloor \frac{n_{t-1}}{2} \right \rfloor .\\
\end{align*}

Hence, by Lemma~\ref{mix}, we have that
\begin{align*}
\lambda_t &\ge \frac{(\textrm{Vol}(X))^{2}}{\textrm{Vol}(G_t)} \cdot \frac{\textrm{Vol}(G_t)}{\textrm{Vol}(X)\textrm{Vol}(\overline{X})}\\
&= \frac{\textrm{Vol}(X)}{\textrm{Vol}(\overline{X})} \\
&\sim \frac{\alpha_{t-1} \cdot \left \lfloor \frac{n_{t-1}}{2} \right \rfloor }{\alpha_{t-1} \cdot \left \lfloor \frac{n_{t-1}}{2} \right \rfloor } \\
&= 1,
\end{align*}
and the result follows. \qed
\end{proof}

We observe that the half-model has a small (in fact, constant) diameter as required for the small-world property. We first prove some results about the connectivity for graphs generated by this model.

\begin{lemma}\label{lemcon}
For all $t \geq 0,$ if $G_t$ is connected and $n_t \geq 2$, then $G_{t+1}$ is connected.	
\end{lemma}

\begin{proof}
If $v$ is a clone in $G_{t+1}$, then since $n_t \geq 2$, we have that $v$ is adjacent to at least one node $u$ in $V(G_{t}) \backslash V(G_{t+1})$. Since $G_t$ is connected by hypothesis, there exists a path from $u$ to any other node of $G_t$, and hence, there is such a path from $v$ to any node of $G_t$. Since the node $v$ was an arbitrary clone, we have shown there exists a path between any two nodes in $G_{t+1}$. \qed \end{proof}

In the case where $n_0=1$, then $G_0$ is $K_1$. Note that $G_1$ is $\overline{K_2},$ and $G_2$ is the disjoint union of two edges. In particular, $G_1$ and $G_2$ are not connected. The subsequent lemma will provide insight into how many iterations a disconnected graph requires before becoming connected.

\begin{lemma}\label{connectedlemma}
For all $t \geq 0,$ if $G_t$ is a graph with $n_t \geq 4,$ then $G_{t+1}$ is connected.
\end{lemma}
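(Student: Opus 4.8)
The plan is to prove connectivity of $G_{t+1}$ \emph{without} assuming $G_t$ is connected, exploiting the fact that the clones added at step $t$ supply fresh connections among the original nodes. The key structural observation is that since $n_t \geq 4$, we have $\lfloor n_t/2 \rfloor \geq 2$. This means that any two original nodes can be placed together inside a single cloning set, so their common clone will link them, regardless of whether they were adjacent (or even in the same component) in $G_t$.

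First I would handle the original nodes $V(G_t)$. Given any two such nodes $u$ and $w$, since $\lfloor n_t/2 \rfloor \geq 2$ and there are $n_t \geq 4$ nodes available, I can choose a set $S \subseteq V(G_t)$ with $|S| = \lfloor n_t/2 \rfloor$ and $\{u,w\} \subseteq S$, filling out $S$ with arbitrary additional nodes. By the definition of the half-model, the corresponding clone $v_S$ is adjacent to every node of $S$, and in particular to both $u$ and $w$. Thus $u\,v_S\,w$ is a path in $G_{t+1}$, so $u$ and $w$ lie in the same component. Since $u$ and $w$ were arbitrary, all of $V(G_t)$ lies in a single connected component of $G_{t+1}$.

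Next I would absorb the clones into this component. Each clone $v_S$ satisfies $|S| = \lfloor n_t/2 \rfloor \geq 1$, so $v_S$ is adjacent to at least one original node, which already belongs to the common component containing $V(G_t)$. Hence every clone is joined to that component as well, and therefore $G_{t+1}$ is connected.

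I do not expect a genuine obstacle here; the entire content of the lemma lives in the numerical hypothesis. The only point requiring care is the existence of a cloning set of the prescribed cardinality containing two prescribed nodes, which needs precisely $\lfloor n_t/2 \rfloor \geq 2$, equivalently $n_t \geq 4$. This is also what makes the bound sharp: for $n_t \in \{2,3\}$ we have $\lfloor n_t/2 \rfloor = 1$, so each clone attaches to a single original node and no clone can bridge two original nodes; taken together with a disconnected $G_t$ this yields a disconnected $G_{t+1}$, consistent with the examples $G_1 = \overline{K_2}$ and $G_2$ (two disjoint edges) recorded before the statement.
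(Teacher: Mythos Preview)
Your proof is correct and uses essentially the same idea as the paper: since $\lfloor n_t/2\rfloor\ge 2$, any two nodes of $V(G_t)$ share a clone neighbor in $G_{t+1}$, and every clone touches $V(G_t)$. The paper packages this as a proof by contraposition with a three-way case split on which of the two disconnected nodes are clones, whereas your direct argument (connect the originals first, then absorb the clones) is a bit cleaner, but the mathematical content is identical.
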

\begin{proof}
We proceed by a proof by contraposition. Suppose then that $G_{t+1}$ is disconnected, and so there exists two nodes $u,v$ in $G_{t+1}$ such that there is no path between them.
	
\smallskip

\noindent \emph{Case 1:} $u,v$ are both in $V(G_t)$.

\smallskip

In this case, there is no set of size $\left \lfloor \frac{n_{t-1}}{2}\right \rfloor$ that contains both $u$ and $v$, since otherwise, a clone in $G_{t+1}$ would be adjacent to both $u,v$. At each time-step $t$, we add a clone for every subset of size $\left \lfloor \frac{n_{t}}{2}\right \rfloor$; hence, it must be the case that $\left \lfloor \frac{n_{t}}{2}\right \rfloor  < 2$ which implies $n_t \leq 3$. This satisfies the negation of the predicate, and we have proved the result in this case.
	
\smallskip

\noindent \emph{Case 2:} Exactly one of $u$ or $v$ is not in $V(G_t)$; without loss of generality, say $u \in V(G_{t+1})\backslash V(G_t)$.

\smallskip
	
As $u$ is a clone it has degree $\left \lfloor \frac{n_{t-1}}{2}\right \rfloor$, and so has a neighbor $x$ in $G_t$, whenever $n_t \geq2$. Thus, there is no path from $x$ to $v$ in $G_t$, and we apply Case 1 using these two nodes.
	
\smallskip

\noindent \emph{Case 3:} Both $u,v$ are in $V(G_{t+1})\backslash V(G_t)$.

\smallskip
	
Since there are at least two clones it must be the case that $\alpha_{t} \geq2$, and so $n_t \geq 2$. There then exists some neighbor $x$ of $u$ in $G_t$ and some neighbor $y$ of $v$ in $G_t$. We then have that there is no path from $x$ to $y$ in $G_t$ and we apply Case 1 to these two nodes. The proof follows. \qed \end{proof}

Our next result proves the 2-connectivity of graphs generated by the half-model.

\begin{lemma}\label{2con}
The graph $G_t$ is 2-connected whenever $t\geq 4$, regardless of the input graph $G_0$.
\end{lemma}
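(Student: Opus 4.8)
The plan is to prove the stronger, self-contained statement that $G_{s+1}$ is 2-connected whenever $n_s \ge 6$, and then to check that $n_s \ge 6$ holds for every $s \ge 3$ and every connected input $G_0$; applying this with $s = t-1$ for $t \ge 4$ yields the theorem. Throughout I write $W = V(G_t)$ for the old nodes and $X = V(G_{t+1}) \setminus V(G_t)$ for the clones, recalling that each clone $v_S$ is adjacent to exactly the $\lfloor n_t/2 \rfloor$ nodes of its set $S \subseteq W$, and that $X$ is independent. Crucially, the argument will use only the clone structure, never the internal edges of $G_t$, so it is insensitive to how $G_t$ itself is arranged.

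First I would record the growth bookkeeping. Since $n_{s+1} = n_s + \alpha_s$ and $\alpha_s \ge 1$, the sequence $(n_s)$ is strictly increasing. For any connected $G_0$ we have $n_0 \ge 1$, and the map $f(n) = n + \binom{n}{\lfloor n/2\rfloor}$ is nondecreasing (the central binomial coefficient is increasing in $n$), so iterating $f$ from $n_0 \ge 1$ gives $n_1 \ge f(1) = 2$, then $n_2 \ge f(2) = 4$, then $n_3 \ge f(4) = 10$. Hence $n_s \ge 10 \ge 6$ for all $s \ge 3$, which is exactly the regime needed below; note the extremal case is $G_0 = K_1$.

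The core step is the local 2-connectivity claim. Assume $n_t \ge 6$, so $\lfloor n_t/2\rfloor \ge 3$ and, for any two nodes $u,w \in W$, the number of clones adjacent to both is $\binom{n_t - 2}{\lfloor n_t/2 \rfloor - 2} \ge 2$ (this count first reaches $2$ at $n_t = 6$ and only grows thereafter). Fix any vertex $z$ of $G_{t+1}$; I claim $G_{t+1} - z$ is connected. For any two old nodes $u, w \in W \setminus \{z\}$, at least two clones are adjacent to both, and at most one of them is $z$, so a common-neighbour clone $v_S$ survives and yields a path $u$–$v_S$–$w$; hence $W \setminus \{z\}$ lies in a single component of $G_{t+1} - z$. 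For any surviving clone $v_S \ne z$, its neighbourhood $S$ has size $\lfloor n_t/2\rfloor \ge 3 > 1$, so $S$ retains a neighbour in $W \setminus \{z\}$ and $v_S$ attaches to that component. Thus $G_{t+1} - z$ is connected for every choice of $z$; since $G_{t+1}$ is connected by Lemma~\ref{connectedlemma} and has at least three vertices, it is 2-connected. Combining this with the growth bound, for $t \ge 4$ we have $n_{t-1} \ge 10 \ge 6$, so the claim applies with $s = t-1$ and shows $G_t$ is 2-connected.

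The main obstacle is not the connectivity argument itself, which is routine once the clone counting is in place, but the threshold bookkeeping. One must pin down the smallest order for which $\binom{n_t-2}{\lfloor n_t/2\rfloor - 2} \ge 2$ — it fails at $n_t \in \{4,5\}$ and first holds at $n_t = 6$ — and then verify that even the smallest admissible input cannot stall the order below this threshold before time $3$, which is where the lower bounds $n_1 \ge 2$, $n_2 \ge 4$, $n_3 \ge 10$ do the work and explain why the hypothesis $t \ge 4$ (rather than $t \ge 3$) is the right one. Careful handling of the floor function in both the binomial count and the growth estimates is the only delicate point.
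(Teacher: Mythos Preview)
Your proof is correct, and it takes a genuinely different route from the paper's. The paper first chains together the earlier connectivity lemmas to conclude that $G_t$ is connected for all $t\ge 3$, and then argues a claim of the form ``$G_t$ connected and $n_t\ge 4$ $\Rightarrow$ $G_{t+1}$ is 2-connected'' by considering a cut-node $u$ of $G_t$ and producing, for any pair $a,b$ separated by $u$, a clone adjacent to both. Your argument sidesteps the connectivity hypothesis on $G_t$ entirely: by counting that any two old nodes have at least $\binom{n_t-2}{\lfloor n_t/2\rfloor-2}\ge 2$ common clone neighbours once $n_t\ge 6$, you show directly that $G_{t+1}-z$ is connected for every $z$, using only the bipartite clone structure and never the internal edges of $G_t$. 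The price is a slightly higher threshold ($n_t\ge 6$ instead of $n_t\ge 4$ plus connectivity), but since you verify $n_3\ge 10$ for every input, this costs nothing at the level of the final statement. What your approach buys is independence from Lemmas~\ref{lemcon} and~\ref{connectedlemma} and a cleaner, purely combinatorial 2-connectivity argument; what the paper's approach buys is a marginally weaker numerical hypothesis on $n_t$ in the intermediate claim.
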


\begin{proof}
Using the recursive formula for the number of edges at time $t$ in the proof of Theorem~\ref{en1}, for any graph $G_0$, we have at least four nodes after two time-steps. Using Lemma~\ref{connectedlemma}, we require at least one additional time-step to ensure connectivity. Thus, regardless of the input graph $G_0,$ it is the case that $G_t$ is connected for $t\geq 3$. We now claim that whenever a graph $G_t$ is connected, $G_{t+1}$ will be 2-connected.

\smallskip

\noindent \emph{Claim: }If $G_t$ is connected and $n_t\geq 4,$ then $G_{t+1}$ is 2-connected.	

\smallskip

If $G_t$ is 2-connected, then we are done since every node in the set $V(G_{t+1})\backslash V(G_t)$ has at least one neighbor in $V(G_t)$, and we may use the same two paths between those neighbors to find 2-connectivity. Suppose $G_t$ is at most 1-connected and thus let $u$ be a cut-node of $G_t$. Consider two nodes in $G_t$, say $a,b$, that have a shortest path through $u$. In $G_{t+1}$, there is some clone $z$ that is adjacent to both $a,b$. Therefore, we have two paths from $a$ to $b$, and the proofs of the claim and theorem follow. \qed \end{proof}

Our main result on the diameter of half-model graphs is the following.

\begin{theorem}
Suppose that $G_0$ has order at least 4. In the half-model, the diameter of $G_t$ for $t\ge 5,$ is at most three.
\end{theorem}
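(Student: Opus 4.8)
The plan is to fix $t \ge 5$, write $G_{t-1}$ for the graph one step earlier, and split $V(G_t)$ into the \emph{old} nodes $V(G_{t-1})$ and the \emph{new} clones $C_t = V(G_t)\setminus V(G_{t-1})$. Two facts will drive the argument. First, since $G_0$ has order at least $4$ and the model never deletes nodes, $n_{t-1}\ge 4$, so $\lfloor n_{t-1}/2\rfloor \ge 2$ and every clone in $C_t$ attaches to a half-subset of $V(G_{t-1})$ of size at least $2$. Second, because $t-1\ge 4$, Lemma~\ref{2con} guarantees that $G_{t-1}$ is $2$-connected, and in particular connected. I would then bound the distance between an arbitrary pair of nodes by cases according to how many of the two lie in $C_t$.

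For two old nodes $u,v\in V(G_{t-1})$, since $\lfloor n_{t-1}/2\rfloor \ge 2$ there is a half-subset $S$ of $V(G_{t-1})$ containing both, and the corresponding clone $v_S\in C_t$ is adjacent to each, giving distance at most $2$. For a mixed pair with $v$ old and $w\in C_t$ a new clone, $w$ has at least one old neighbor $x$; combining the edge $wx$ with the distance-$2$ bound between the old nodes $x$ and $v$ yields distance at most $3$. Thus the only remaining case is a pair of new clones.

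The crux is the case of two new clones $w,w'\in C_t$, with old neighborhoods $N_w, N_{w'}\subseteq V(G_{t-1})$, each of size $\lfloor n_{t-1}/2\rfloor$. If $N_w\cap N_{w'}\neq\emptyset$ they share an old neighbor and have distance $2$; the difficulty is when $N_w$ and $N_{w'}$ are disjoint, where I must exhibit an edge $ab$ of $G_{t-1}$ with $a\in N_w$ and $b\in N_{w'}$, so that $w,a,b,w'$ is a path of length $3$. This is exactly where I expect the real work to sit, and where $2$-connectivity becomes essential. When $n_{t-1}$ is even, disjointness forces $N_w\cup N_{w'}=V(G_{t-1})$, so the absence of an $N_w$--$N_{w'}$ edge would disconnect $G_{t-1}$, contradicting connectivity. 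When $n_{t-1}$ is odd, $N_w$ and $N_{w'}$ omit exactly one node $r$; if there were no edge between them, then every $N_w$--$N_{w'}$ path would route through $r$, making $r$ a cut-node and contradicting the $2$-connectivity supplied by Lemma~\ref{2con}. In either parity the required crossing edge exists, completing the distance-$3$ bound and hence the proof.
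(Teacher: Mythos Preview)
Your proposal is correct and follows essentially the same approach as the paper: the same three-case split according to how many of the two nodes are clones, the same use of a common half-subset to get distance $2$ between old nodes, and the same crux in the disjoint-neighborhood clone case handled via $2$-connectivity of $G_{t-1}$ from Lemma~\ref{2con}. Your parity analysis (even: connectivity forces a crossing edge; odd: the unique omitted node would otherwise be a cut-node) is exactly the paper's ``no such $z$'' / ``such a $z$ exists'' dichotomy, stated a bit more explicitly.
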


\begin{proof}
We consider the distance between two non-adjacent nodes $x, y \in V(G_t)$ in three cases.

\smallskip

\noindent \emph{Case 1}: $x,y \in V(G_{t-1})$.

\smallskip

There exists some set $S \subseteq V(G_{t-1})$ of cardinality $\left \lfloor \frac{n_{t-1}}{2} \right \rfloor$ containing both $x$ and $y$. Thus, the dominating node for this set $S$, $v_S$ is adjacent to both $x$ and $y$ so their distance is 2.

\smallskip

\noindent \emph{Case 2}: $x \in V(G_{t-1})$ and $y \notin V(G_{t-1})$.

\smallskip

There exists a node $z \in  N_{G_t}(y)$. There is some set $S \subseteq V(G_{t-1})$ so that $x,z \in S$. The node $v_S$ that dominates $S$ in $G_t$ is adjacent to both $x$ and $z$, so we have the path $yzv_Sx$. Hence, the distance between $x$ and $y$ is at most 3. The symmetric case where $y \in V(G_{t-1})$ and $x \notin V(G_{t-1})$ is analogous.

\noindent \emph{Case 3}: $x,y \notin V(G_{t-1})$.

Since $x,y$ are new nodes in time-step $t$, there must be two sets $S_x, S_y \subseteq V(G_{t-1}),$ where $x$ dominates $S_x$ and $y$ dominates $S_y$. If $S_x \bigcap S_y \neq \emptyset,$ then there is some node of $G_{t-1}$ adjacent to both $x$ and $y$, so their distance is 2. Suppose now that $S_x \bigcap S_y = \emptyset$.  Since $|S_x| = |S_y| = \left \lfloor \frac{n_{t-1}}{2} \right \rfloor$, it may be the case that there exists a node $z \notin S_x \cup S_y$.

Suppose first that there is no such node $z$. There must be some edge with one endpoint in $S_x$ and the other in $S_y,$ since otherwise, the graph would be disconnected, which contradicts Lemma~\ref{2con}. We call these two endpoints $a$ and $b$. We then have a path $xaby$ and the distance between $x$ and $y$ is 3.

If there is such a node $z$, then since $G_{t}$ is 2-connected by Lemma~\ref{2con}, $z$ cannot be a cut-node. Therefore, there must be some edge with one endpoint in $S_x$ and the other in $S_y$ and the distance between $x$ and $y$ is 3.\qed \end{proof}

\section{Graph Parameters for the Half-Model}\label{sech}

In this section, we discuss classical graph parameters for the half-model. For further background on these parameters, the reader is directed to \cite{west}. We begin by considering the independence and clique number.

\begin{theorem}
The independence number of $G_t$ is $\alpha_{t-1}$ and for the clique number we have
$$ \chi (G_t) \geq \mathrm{min}\left( \left \lfloor \frac{n_{t-1}}{2} \right \rfloor + 1  ,  \omega(G_0)+t  \right).$$
\end{theorem}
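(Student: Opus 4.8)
The plan is to treat the two parameters separately: the independence number by a direct counting argument, and the clique number by an inductive one-step growth lemma.

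For the independence number, the lower bound is immediate, since the $\alpha_{t-1}$ clones created at time-step $t$ are pairwise non-adjacent and so form an independent set of size $\alpha_{t-1}$. For the matching upper bound I would take an arbitrary independent set $I \subseteq V(G_t)$ and set $j = |I \cap V(G_{t-1})|$. A clone $v_S$ is adjacent to an old node $w$ exactly when $w \in S$, so every clone lying in $I$ must have its defining set $S$ disjoint from $I \cap V(G_{t-1})$; there are at most $\binom{n_{t-1}-j}{\lfloor n_{t-1}/2 \rfloor}$ such sets. Hence $|I| \le j + \binom{n_{t-1}-j}{\lfloor n_{t-1}/2 \rfloor}$, and it remains to check that the right-hand side is largest at $j=0$. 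This follows from Pascal's rule: decreasing $j$ by one raises the binomial term by $\binom{n_{t-1}-j}{\lfloor n_{t-1}/2 \rfloor - 1} \ge 1$, which outweighs the accompanying unit decrease of the additive term $j$. Thus $|I| \le \binom{n_{t-1}}{\lfloor n_{t-1}/2 \rfloor} = \alpha_{t-1}$.

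For the clique number, the starting observation is that the clones form an independent set, so any clique of $G_t$ contains at most one clone. Consequently a maximum clique of $G_t$ is either contained in $G_{t-1}$, or consists of a single clone $v_S$ together with a clique of $G_{t-1}$ lying inside its neighbourhood $S$, which has size $\lfloor n_{t-1}/2 \rfloor$. To get the lower bound I would take a maximum clique $K$ of $G_{t-1}$ and choose a set $S$ of size $\lfloor n_{t-1}/2 \rfloor$ containing as many vertices of $K$ as possible, namely $\min(\omega(G_{t-1}), \lfloor n_{t-1}/2 \rfloor)$ of them; the clone $v_S$ then completes a clique of size $\min(\omega(G_{t-1}), \lfloor n_{t-1}/2 \rfloor) + 1$. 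Using $\min(a,b)+1 = \min(a+1,b+1)$, this yields the one-step estimate
\[ \omega(G_t) \ge \min\!\left(\omega(G_{t-1})+1,\ \left\lfloor \tfrac{n_{t-1}}{2}\right\rfloor + 1\right). \]

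Finally, since $\chi(G_t) \ge \omega(G_t)$, the stated lower bound for $\chi$ follows once the one-step estimate is iterated. As long as the current clique fits inside a half-set, that is $\omega(G_{s-1}) \le \lfloor n_{s-1}/2 \rfloor$, the clique number increases by exactly one, and unwinding the recursion from $G_0$ contributes the term $\omega(G_0)+t$; the cap $\lfloor n_{t-1}/2 \rfloor + 1$ becomes the binding constraint only in the regime where a clique would otherwise exceed half of the current order. I expect this induction to be the main obstacle. By Theorem~\ref{en1} the half-set size $\lfloor n_t/2 \rfloor$ grows like the central binomial coefficient while the clique grows by only one per step, so one must argue that the cap is essentially inactive after the first few steps, and separately handle the base case and dense initial graphs, where $\omega(G_0)$ may already exceed $\lfloor n_0/2 \rfloor$ and the growth can stall for a step. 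Reconciling the two terms of the minimum cleanly through the induction is the delicate part of the argument.
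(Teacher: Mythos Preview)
Your approach matches the paper's in outline, but is considerably more careful. For the independence number the paper simply observes that the clones form an independent set of size $\alpha_{t-1}\ge n_{t-1}$ and asserts, without further argument, that this must be the maximum; your bound $|I|\le j+\binom{n_{t-1}-j}{\lfloor n_{t-1}/2\rfloor}$ together with the Pascal comparison supplies the missing upper-bound justification. For the clique part the paper likewise gives only the one-step observation you wrote down (a maximum clique either lies inside some half-set and is extended by its clone, or is capped at $\lfloor n_{t-1}/2\rfloor+1$) and stops there; it does not carry out the induction back to $G_0$ at all. The concern you flag is therefore not a defect of your proposal but a genuine gap already present in the paper: iterating the one-step bound yields
\[
\omega(G_t)\ \ge\ \min\Big(\omega(G_0)+t,\ \min_{0\le s\le t-1}\big(\lfloor n_s/2\rfloor+t-s\big)\Big),
\]
and for dense initial graphs the early caps can undercut both displayed terms. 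For instance with $G_0=K_4$ one has $\omega(G_1)=4$ and hence $\omega(G_2)=5$, whereas the stated bound at $t=2$ is $\min(\lfloor 10/2\rfloor+1,\,4+2)=6$. So your instinct that the iteration is the delicate point is exactly right, and the paper's short proof does not resolve it.
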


\begin{proof}
At each time-step $t$, all the cloned nodes form an independent set. The set of new nodes has order $\alpha_{t-1} \geq n_{t-1}$, so this set must be the largest independent set in $G_t$. Therefore, we derive that $\alpha(G_t) = \alpha_{t-1}.$

We next consider the clique number of $G_{t}$. At each time-step $t$, we add a dominating node to subsets of cardinality $\left \lfloor \frac{n_{t-1}}{2} \right \rfloor  $ from $G_{t-1}$. If the largest clique $K$ at time-step $t-1$ is contained in one such subset, then we have increased the order of $K$ by 1. However, the maximum degree of new nodes is $\left \lfloor \frac{n_{t-1}}{2} \right \rfloor$. Hence, we cannot increase the size of the largest clique to be larger than $\left \lfloor \frac{n_{t-1}}{2} \right \rfloor +1.$ \qed\end{proof}

We next give the chromatic number of the half-model.

\begin{theorem}
For the half-model, we have that the chromatic number is given by
	
	\[\chi(G_t)= \mathrm{min} \left( \chi(G_0)+t, \left \lfloor \frac{n_{t-1}}{2} \right \rfloor  +1 \right). \]

\end{theorem}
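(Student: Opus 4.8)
The plan is to prove the two inequalities $\chi(G_t)\le \min(\chi(G_0)+t,\lfloor n_{t-1}/2\rfloor+1)$ and $\chi(G_t)\ge \min(\chi(G_0)+t,\lfloor n_{t-1}/2\rfloor+1)$ separately, via an induction on $t$ that mirrors the preceding clique-number argument. Two structural facts drive everything: the clones added at each step form an independent set (as noted just after the model is defined), and every clone has degree exactly $\lfloor n_{t-1}/2\rfloor$. Throughout I would freely use $\omega(G_t)\le\chi(G_t)$ together with the clique-number theorem to control the ``saturated'' regime where the cap $\lfloor n_{t-1}/2\rfloor+1$ is the active term.

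For the upper bound I would establish two ceilings and take their minimum. First, $\chi(G_t)\le\chi(G_{t-1})+1$: take an optimal colouring of $G_{t-1}$ and assign one fresh colour to the entire independent set of clones; unrolling from the base case $G_0$ gives $\chi(G_t)\le\chi(G_0)+t$. Second, $\chi(G_t)\le\lfloor n_{t-1}/2\rfloor+1$: starting from a colouring of $G_{t-1}$ that uses at most $\lfloor n_{t-1}/2\rfloor+1$ colours, extend greedily to the clones, each of which, having only $\lfloor n_{t-1}/2\rfloor$ neighbours, always sees a free colour among $\lfloor n_{t-1}/2\rfloor+1$. Combining the two ceilings yields the claimed upper bound.

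For the lower bound the key is a forcing lemma. Fix any proper colouring of $G_{t-1}$ using $c$ colours with $c\le\lfloor n_{t-1}/2\rfloor$. Choosing one representative from each colour class and padding the resulting rainbow set up to size $\lfloor n_{t-1}/2\rfloor$ produces a set $S\subseteq V(G_{t-1})$ meeting every colour class; the clone $v_S$ is adjacent to all of $S$, so it cannot reuse any of the $c$ colours and forces a $(c+1)$st. This gives $\chi(G_t)\ge\min(\chi(G_{t-1})+1,\lfloor n_{t-1}/2\rfloor+1)$, which I would iterate from the base value $\chi(G_0)$; in the regime where the cap is the smaller term I would instead invoke $\chi(G_t)\ge\omega(G_t)$ and the clique-number theorem to pin the value at $\lfloor n_{t-1}/2\rfloor+1$.

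The main obstacle I anticipate is the second upper ceiling, $\chi(G_t)\le\lfloor n_{t-1}/2\rfloor+1$: the greedy extension stays within $\lfloor n_{t-1}/2\rfloor+1$ colours only if $G_{t-1}$ is \emph{itself} colourable within that many colours, which is essentially the statement one step earlier. Making this rigorous requires tracking how $\chi(G_{t-1})$ sits relative to the cap, and using that $n_t$ grows super-exponentially (by Theorem~\ref{en1}, $n_t\sim\alpha_{t-1}$, of central-binomial size) so that once the cap $\lfloor n_{t-1}/2\rfloor+1$ exceeds the linear term $\chi(G_0)+t$ it never again drops below it. The two terms of the minimum therefore cross at most once, and the delicate bookkeeping is precisely at that crossover and at the small-$t$ base cases, where the two ceilings must be checked to meet consistently.
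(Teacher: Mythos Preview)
Your plan is essentially the paper's argument. The paper's entire proof is the one-step dichotomy you isolate: take a rainbow set of size $r=\chi(G_{t-1})$; if $r\le\lfloor n_{t-1}/2\rfloor$ then some clone's neighbourhood contains it and a new colour is forced, while if $r>\lfloor n_{t-1}/2\rfloor$ then every clone, having only $\lfloor n_{t-1}/2\rfloor$ neighbours, misses some colour and can be greedily absorbed. That is exactly your forcing lemma plus your greedy ceiling. The paper does not unroll this recurrence to the closed form, so the crossover bookkeeping you identify as the main obstacle is simply left implicit there; your discussion of it is already more careful than the published proof.

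One small wrinkle in your lower-bound plan: invoking $\chi(G_t)\ge\omega(G_t)$ together with the clique-number theorem in the cap regime does not quite close, because that theorem only gives $\omega(G_t)\ge\min(\lfloor n_{t-1}/2\rfloor+1,\ \omega(G_0)+t)$, and $\omega(G_0)+t$ can sit strictly below both $\chi(G_0)+t$ and the cap. You do not need this detour at all: once $\chi(G_{t-1})\ge\lfloor n_{t-1}/2\rfloor+1$, the containment $G_{t-1}\subseteq G_t$ already yields $\chi(G_t)\ge\lfloor n_{t-1}/2\rfloor+1$ directly.
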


\begin{proof}
Suppose that $G_t$ is properly colored. Consider a \emph{rainbow} subset of nodes; that is, a set of nodes that requires each distinct color in the graph. Let the cardinality of this set be $r \ge 1$. When $r\le \left \lfloor \frac{n_{t-1}}{2} \right \rfloor $, any new clone added that contains this set in its neighbors will need a new color. When $r> \left \lfloor \frac{n_{t-1}}{2} \right \rfloor $, any new clone that is added will have a neighbor set smaller than the cardinality of the colors, which implies there will always be an available color. \qed\end{proof}

We finish by proving a result on the domination number of graphs generated by the half-model.

\begin{theorem}
The domination number of $G_t$ is

\[\gamma(G_t) = \left \lceil \frac{n_{t-1}}{2} \right \rceil  + 1.\]
\end{theorem}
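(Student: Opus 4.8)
The plan is to bound $\gamma(G_t)$ from both sides by exploiting the single structural fact that every clone $v_S$ added at time $t$ has neighborhood exactly $S \subseteq V(G_{t-1})$. Consequently $v_S$ can be dominated \emph{only} by a node of $S$ or by $v_S$ itself; clones cannot help dominate other clones, since they form an independent set. Throughout, write $m = \left\lfloor \frac{n_{t-1}}{2} \right\rfloor$ and, for a dominating set $D$, split it as $D = D_o \cup D_c$, where $D_o = D \cap V(G_{t-1})$ and $D_c$ is the set of chosen clones. Note $n_{t-1} - \left\lceil \frac{n_{t-1}}{2}\right\rceil = m$, so the target value is $m+1$ when $n_{t-1}$ is even and $m+2$ when it is odd.

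For the upper bound I would give an explicit construction that, pleasingly, is independent of the structure of $G_0$. Fix any partition $V(G_{t-1}) = A \sqcup W$ with $|A| = \left\lceil \frac{n_{t-1}}{2}\right\rceil$ and $|W| = m$, and set $D = A \cup \{v_W\}$, which has cardinality $\left\lceil \frac{n_{t-1}}{2}\right\rceil + 1$. I would then verify that $D$ dominates $G_t$ as follows. Every node of $W$ is adjacent to $v_W$, and every node of $A$ lies in $D$, so all old nodes are dominated. For a clone $v_S$ with $S \neq W$, we have $S \cap A \neq \emptyset$ (the only $m$-subset of $V(G_{t-1})$ disjoint from $A$ is $W$ itself), so $v_S$ has a neighbor in $A \subseteq D$; and $v_W \in D$ covers the remaining case. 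Hence $D$ is dominating.

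For the lower bound, the crucial observation is that $D_o$ must essentially be a hitting set for the $m$-subsets. Put $W = V(G_{t-1}) \setminus D_o$. Every $m$-subset $S \subseteq W$ satisfies $S \cap D_o = \emptyset$, so its clone $v_S$ is undominated unless $v_S \in D_c$. Therefore
\[
|D| = |D_o| + |D_c| \ \geq\ |D_o| + \binom{\,n_{t-1} - |D_o|\,}{m}.
\]
Setting $d = |D_o|$, it remains to show that $g(d) = d + \binom{n_{t-1}-d}{m}$ is minimized at the value $\left\lceil \frac{n_{t-1}}{2}\right\rceil + 1$. This is the one place requiring care, and I regard it as the main (though modest) obstacle. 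I would handle it by a discrete monotonicity argument: for $d \geq \left\lceil \frac{n_{t-1}}{2}\right\rceil + 1$ the binomial term vanishes and $g(d) = d$ is increasing, while for $d \leq \left\lceil \frac{n_{t-1}}{2}\right\rceil$ the forward difference is $g(d) - g(d+1) = \binom{n_{t-1}-d-1}{m-1} - 1 \geq 0$, so $g$ decreases up to $d = \left\lceil \frac{n_{t-1}}{2}\right\rceil$. Both endpoints $d = \left\lceil \frac{n_{t-1}}{2}\right\rceil$ and $d = \left\lceil \frac{n_{t-1}}{2}\right\rceil + 1$ give $g = \left\lceil \frac{n_{t-1}}{2}\right\rceil + 1$, so this is the global minimum. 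Since the lower bound ignores the additional requirement of dominating the old nodes, it holds for \emph{every} dominating set, yielding $\gamma(G_t) \geq \left\lceil \frac{n_{t-1}}{2}\right\rceil + 1$, and combining with the construction gives equality.
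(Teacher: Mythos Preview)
Your argument is correct. The upper-bound construction is literally the same as the paper's (your $A,W,v_W$ are the paper's $T,S,x_S$), so there is nothing to add there.

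For the lower bound the paper proceeds differently: it assumes a dominating set $X$ of size $\left\lceil n_{t-1}/2\right\rceil$ and splits into the two cases ``$X$ consists only of old nodes'' and ``$X$ contains a clone'', in each case exhibiting a single undominated clone. Your route via the inequality
\[
|D|\ \ge\ d+\binom{n_{t-1}-d}{m},\qquad d=|D\cap V(G_{t-1})|,
\]
followed by the discrete minimisation of $g(d)=d+\binom{n_{t-1}-d}{m}$, is a genuinely different and somewhat more systematic packaging of the same obstruction. What it buys you is uniformity: you never have to exhibit a specific undominated vertex, and the case split on the composition of $D$ disappears into the single parameter $d$. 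It also patches a small looseness in the paper's Case~2, where one must check that the clone $z$ found there can be taken outside $X$; your counting argument handles this automatically, since $\binom{n_{t-1}-d}{m}$ already exceeds the number of clones available in $D_c$. Conversely, the paper's version is shorter and avoids the (easy but explicit) monotonicity check on $g$.
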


\begin{proof}

We will first establish the upper bound $$\gamma(G_t) \le \left \lceil \frac{n_{t-1}}{2} \right \rceil  + 1.$$

Consider a set $S$ of $\lfloor \frac{n_{t-1}}{2} \rfloor$ non-clone nodes in $G_{t-1}$. The node $x_S$ dominates $S.$ The complement $T$ of $S$ in $V(G_{t-1})$ has cardinality $\left \lceil \frac{n_{t-1}}{2} \right \rceil.$ Hence, $T \cup \{x_S \}$ is the desired dominating set.

For the lower bound, we must show that $\gamma(G_t) > \left \lceil \frac{n_{t-1}}{2} \right \rceil .$ For a contradiction, suppose that some set of $\left \lceil \frac{n_{t-1}}{2} \right \rceil$-many nodes, say $X$, dominates $G_t.$ Suppose first that $X$ consists of non-clones. Regardless of the choice of $X$, there will be some set of non-clones, call it $T$, of size $\left \lfloor \frac{n_{t-1}}{2} \right \rfloor$ such that $X \cap T = \emptyset$. Thus, $x_T$ is not dominated, which is a contradiction.

\begin{figure}[ht!]
\begin{center}
\includegraphics[width=5cm]{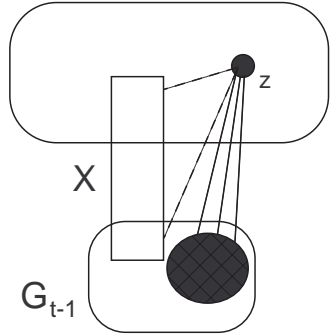}
\caption{The node $z$ is not adjacent to $X.$}\label{dom1}
\end{center}
\end{figure}
Suppose that $X$ contains at least one clone. There is a least one clone $z$ not adjacent to $X\cap V(G_{t-1})$, since $|X\cap V(G_{t-1})| < \left \lceil \frac{n_{t-1}}{2} \right \rceil$. See Figure~\ref{dom1}.
Any clone in $X$ is not adjacent to $z$, since the clones form an independent set. Therefore, $z$ is not dominated by $X$, which gives a contradiction. \qed\end{proof}

\section{Conclusion and further directions}

We introduced the Iterated Global Model (IGM) for complex networks. The IGM adds new nodes joined to $\lfloor \frac{1}{k}n_t \rfloor ,$ where $n_t$ is the number of nodes at time $t$. Our focus was the case $k=2,$ and we proved that graphs generated by the half-model exhibit densification, low distances, and bad spectral expansion as found in real-world, complex networks. We investigated various classical graph parameters for this model, including the clique, chromatic, and domination numbers.

Several open problems remain concerning properties of graphs generated by the half-model. Graph limits consider dense sequences of graphs and analyze their properties based on their homomorphism densities; see \cite{L}. Since the half-model generates dense sequences of graphs, it would be interesting to explore their graph limits. In the full version, we will consider the clustering coefficient of the half-model, analyze its subgraph counts, and degree distribution. Another interesting direction would be to generalize our results to integers $k > 2.$

\end{document}